
\documentclass[journal,a4]{IEEEtran}
%

\usepackage[T1]{fontenc}

\usepackage{subfigure}
\usepackage{amsmath}
\usepackage{graphicx}
\usepackage{epstopdf}
\DeclareGraphicsExtensions{.eps,.pdf} 
\graphicspath{ {figures/} }
\usepackage{cite}
\usepackage{subfigure}
\usepackage{algorithm}
\usepackage{algorithmic}
\usepackage{xcolor}
\usepackage{multirow}
\usepackage{multicol}
\usepackage{cases}
\usepackage{setspace}
%
\interdisplaylinepenalty=2500

\usepackage[cmintegrals]{newtxmath}
%



    \usepackage[compact]{titlesec}
    \titlespacing{\section}{0pt}{2ex}{1ex}
    \titlespacing{\subsection}{0pt}{1ex}{0ex}
    \titlespacing{\subsubsection}{0pt}{0.5ex}{0ex}

\hyphenation{op-tical net-works semi-conduc-tor}

\newcommand{\diag}{{\mathrm{diag}}}
\newcommand{\st}{{\mathrm{s.t.}}}
\newcommand{\ds}{\displaystyle}
\newtheorem{theorem}{\bf {Theorem}}

\begin{document}
%
\title{\huge Energy Efficiency Maximization in RIS-Aided Cell-Free Network with Limited Backhaul  }
%
%
%

\author{Quang~Nhat~Le,~\IEEEmembership{Graduate~Student~Member,~IEEE,}
        Van-Dinh~Nguyen,~\IEEEmembership{Member,~IEEE,}
        Octavia~A.~Dobre,~\IEEEmembership{Fellow,~IEEE,}
        and~Ruiqin~Zhao,~\IEEEmembership{Member,~IEEE}
\vspace{-20pt}
\thanks{This work was supported by the Natural Sciences and Engineering Research Council of Canada (NSERC) through its Discovery program.}
\thanks{Q. N. Le and O. A. Dobre are with Department of Electrical and Computer Engineering, Memorial University, St. John’s, NL A1B 3X9, Canada (e-mail: \{qnle, odobre\}@mun.ca).}
\thanks{V.-D. Nguyen is with the Interdisciplinary Centre for Security, Reliability, and Trust (SnT) – University of Luxembourg, L-1855, Luxembourg (e-mail: dinh.nguyen@uni.lu).}
\thanks{R. Zhao is with Key Laboratory of Ocean Acoustics and Sensing, School of Marine Science and Technology, Northwestern Polytechnical University, Xi’an 710072, China (e-mail: rqzhao@nwpu.edu.cn).}
\vspace{-10pt}
}

%
%

\markboth{}%
{Shell \MakeLowercase{\textit{et al.}}: Bare Demo of IEEEtran.cls for IEEE Communications Society Journals}
%



\maketitle

\begin{abstract}
Integrating the reconfigurable intelligent surface in a cell-free (RIS-CF) network is an effective solution to improve the capacity and coverage of future wireless systems with  low cost and power consumption. The reflecting coefficients of RISs can be programmed to enhance  signals received at users. This letter addresses a joint design of transmit beamformers at access points and reflecting coefficients at RISs to maximize the energy efficiency (EE) of RIS-CF networks, taking into account the limited backhaul capacity constraints. Due to a very computationally challenging nonconvex problem, we develop a simple yet efficient alternating descent algorithm for its solution. Numerical results verify that the EE of RIS-CF networks is greatly improved, showing the benefit of using RISs.
\end{abstract}

\begin{IEEEkeywords}
Cell-free network, energy efficiency, limited backhaul, reconfigurable intelligent surface.
\end{IEEEkeywords}

\IEEEpeerreviewmaketitle

\section{Introduction}

\IEEEPARstart{U}{ltra}-dense networks (UDNs) have been advocated as a key enabler for beyond fifth-generation wireless networks to further increase  network capacity \cite{YadavWC2018}. The underlying principle of UDN is to densely deploy a large number of access points (APs) and small cells in  cellular networks. However, the high density of APs and small cells comes at a cost of severe inter-cell interference \cite{TengST2018}. 

In order to address this bottleneck, cell-free (CF) networks have been recently proposed as a promising technology to effectively resolve the interference issues in  existing cellular networks \cite{HieuJSAC2020,Ngo:TWC:Mar2017}. Since each user equipment (UE) in the network is coherently served by a large number of APs coordinated by a central processing unit (CPU) with no cell boundaries, inter-cell interference can be efficiently reduced, and thus the network capacity can be enhanced accordingly \cite{Ngo:TWC:Mar2017}. Nonetheless, the performance of CF networks is heavily constrained by the limited backhaul capacity between  APs and  CPU \cite{LuongTSP17,PolegreWSA20}. Further, the dense deployment of APs in CF networks results in an increase in the network energy consumption \cite{Ngo:IEEETGCN:Mar2018}. Therefore, an efficient scheme to improve the network energy efficiency (EE), which will be considered as a major figure-of-merit in the design of future networks, is of crucial importance. 

Fortunately, the new revolutionary technology called reconfigurable intelligent surface (RIS) has been identified as a spectral efficient solution with low cost and power consumption \cite{WuTWC20}. An RIS consists of a large number of low-cost passive elements, where each element can be adjusted with an independent phase shift to reflect the electromagnetic incident signals, to be added coherently at UEs. It is not too far-fetched to envision a wireless system integrating RIS in a CF network, referred to as RIS-CF, reaping all key advantages of these two technologies. Despite its potential, only some attempts have been made to characterize the performance of RIS-CF in the literature \cite{Zhang2020,Huang2020}.  Unlike these works, which are mainly focusing on maximizing the sum-rate  with infinite backhaul capacity links, our goal is to achieve an optimal tradeoff between the total sum-rate and power consumption, taking into account the impact of limited backhaul capacity. 

Naturally, the beamformers at APs and RIS reflecting coefficients need to be jointly optimized to maximize the EE of RIS-CF, which results in a computationally intractable problem since the optimization variables are strongly coupled. To efficiently solve this problem, the alternating descent-based iterative algorithm is proposed, which converges at least to a locally optimal solution. In each iteration  of alternating optimization, we develop new approximate functions to tackle the nonconvex parts by leveraging the inner approximation (IA) framework \cite{Beck:JGO:10} and introducing a novel penalty function. Simulation results confirm that the proposed algorithm greatly improves the EE of cell-free networks over the existing approaches.

\emph{Notation}: $\mathbf{X}^{T}$ and $\mathbf{X}^{H}$   are the transpose and Hermitian transpose  of a matrix $\mathbf{X}$, respectively.  $\|\cdot\|$ and $|\cdot|$ denote the Euclidean norm of a vector and the absolute value of a complex scalar, respectively.  $\Re\{\cdot\}$ returns the real part of an argument.   

\section{System Model} \label{sec:sys}
\begin{figure}[!h]
	\centering	\vspace{-10pt}
	\includegraphics[width=0.30\textwidth,trim={0cm 0.0cm 0cm 0.0cm}]{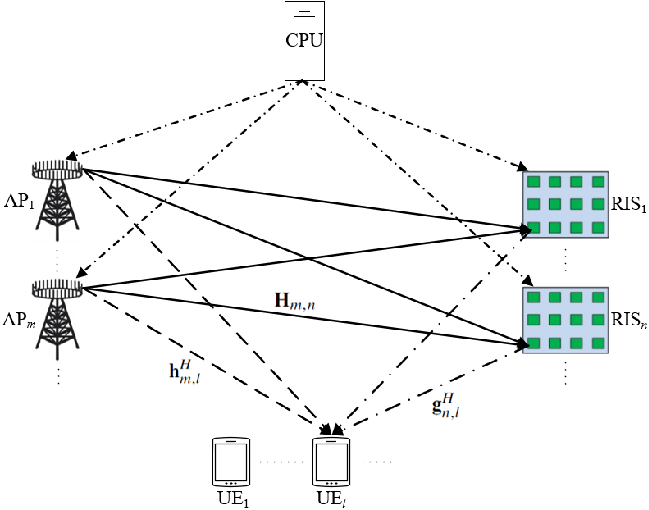}
	\vspace{-5pt}
	\caption{Illustration of an RIS-CF network.}
	\label{fig:sys1}
\end{figure}

We consider an RIS-CF network as illustrated in Fig. \ref{fig:sys1}, where the sets $\mathcal{M}\triangleq\{1,2,\cdots,M\}$ of $M$ APs and $\mathcal{N}\triangleq\{1,2,\cdots,N\}$ of $N$ RISs are distributedly deployed to coherently serve the set $\mathcal{L}\triangleq\{1,2,\cdots,L\}$ of $L$ single-antenna UEs. Each AP is equipped with $K$ antennas, and each RIS is composed of the set $\mathcal{R}\triangleq\{1,2,\cdots,R\}$ of $R$ passive reflecting elements. A CPU is deployed for control and planning purposes, to which all APs are connected by wired limited-capacity backhaul links. The backhaul link between AP$_m$ and CPU  has the predetermined maximum capacity $C_{m}^{\max}$, $\forall m \in \mathcal{M}$. All RISs are controlled by the CPU or APs by wired or wireless links.

\subsection{Transmission Model}
The transmitted complex baseband signal $\boldsymbol{x}_m \in \mathbb{C}^{K \times 1}$ at AP$_m$ can be written as
$\boldsymbol{x}_m = \sum_{l\in\mathcal{L}} \boldsymbol{w}_{m,l} s_{l},$
where $s_{l}$ with $\mathbb{E}\{|s_{l}|^2\} =1$ and $\boldsymbol{w}_{m,l} \in \mathbb{C}^{K \times 1}$ are the transmitted symbol and beamforming vector intended for  UE $l$, respectively.
Due to the directional reflection supported by $N$ RISs, the channel between an AP and a UE  includes two parts: the AP-UE (direct) link and $N$ AP-RIS-UE (reflected) links. The equivalent channel $\hat{\mathbf{h}}^{H}_{m,l} \in \mathbb{C}^{1 \times K}$ from 
AP$_m$ to  UE $l$ can be expressed as
\begin{align}
\hat{\textbf{h}}^{H}_{m,l}(\boldsymbol{\psi}) & = \textbf{h}^{H}_{m,l} + \sum\nolimits_{n\in\mathcal{N}} \textbf{g}^{H}_{n,l} \boldsymbol{\Phi}_{n} \textbf{H}_{m,n} \nonumber \\ 
&= \textbf{h}^{H}_{m,l} + \sum\nolimits_{n\in\mathcal{N}} \boldsymbol{\psi}^{T}_{n} \diag \bigl(\textbf{g}^{H}_{n,l}\bigr) \textbf{H}_{m,n}
\end{align}
where $\textbf{h}^{H}_{m,l} \in \mathbb{C}^{1 \times K}$, $\textbf{H}_{m,n} \in \mathbb{C}^{R \times K}$, and $\textbf{g}^{H}_{n,l} \in \mathbb{C}^{1 \times R}$ denote the channels from  AP$_m$ to  UE $l$, from  AP$_m$ to  RIS$_{n}$, and from  RIS$_{n}$ to  UE $l$, respectively. $\boldsymbol{\Phi}_{n} \in \mathbb{C}^{R \times R}$ represents the phase shift matrix of  RIS$_{n}$, which can be written as \cite{WuTWC20}:
$\boldsymbol{\Phi}_{n} \triangleq \diag(e^{j\theta_{n,1}},e^{j\theta_{n,2}},\ldots,e^{j\theta_{n,R}}),$
where $\theta_{n,r} \in [0,2\pi)$ denotes the phase shift of the $r$-th reflecting element on the RIS$_{n}$. Further, $\boldsymbol{\Phi}_{n}$ can be rewritten as $\boldsymbol{\Phi}_{n} = \diag\left(\psi_{n,1},\psi_{n,2},\ldots,\psi_{n,R}\right)$, with $|\psi_{n,r}| = 1$, $\forall n \in \mathcal{N}$, $ r \in \mathcal{R}$. Let us define $\boldsymbol{\psi} \triangleq \{\boldsymbol{\psi}_{n}\}_{\forall n}$ with $\boldsymbol{\psi}_{n} = \bigr[ \psi_{n,1},\psi_{n,2},\ldots,\psi_{n,R} \bigr]^{T}$.

The signal  received at  UE $l$ can be expressed as
\begin{align}\label{eq:yli}
y_{l} & = \sum\nolimits_{m\in\mathcal{M}} \hat{\textbf{h}}^{H}_{m,l} \boldsymbol{x}_m + n_{l}
\end{align}
where $n_{l} \sim \mathcal{CN}(0,\sigma^2)$ is the additive white Gaussian noise (AWGN) noise at  UE $l$. The achievable data rate (nats/s/Hz) of UE $l$ is given as
\begin{align}\label{eq:rate}
{R}_{l} (\boldsymbol{w},\boldsymbol{\psi}) &= \ln \left(1 + \gamma_{l} (\boldsymbol{w},\boldsymbol{\psi}) \right) \nonumber \\
&= \ln \Biggl( 1 + \frac{\bigl| \sum\nolimits_{m\in\mathcal{M}} \hat{\textbf{h}}^{H}_{m,l} (\boldsymbol{\psi}) \boldsymbol{w}_{m,l} \bigl|^2}{\sum\nolimits_{j \in \mathcal{L} \setminus l} \bigr| \sum\nolimits_{m\in\mathcal{M}} \hat{\textbf{h}}^{H}_{m,l} (\boldsymbol{\psi}) \boldsymbol{w}_{m,j} \bigr|^2 + \sigma^2} \Biggr) \nonumber \\
&= \ln \Biggl( 1 + \frac{\bigl| \hat{\textbf{h}}^{H}_{l} (\boldsymbol{\psi}) \boldsymbol{w}_{l} \bigl|^2}{\sum\nolimits_{j \in \mathcal{L} \setminus l} \bigl| \hat{\textbf{h}}^{H}_{l} (\boldsymbol{\psi}) \boldsymbol{w}_{j} \bigr|^2 + \sigma^2} \Biggr)
\end{align}
where $\boldsymbol{w} \triangleq \{\boldsymbol{w}_{m,l}\}_{\forall m, l}$,  $\hat{\textbf{h}}_{l} = \bigl[ \hat{\textbf{h}}^{H}_{1,l}, \hat{\textbf{h}}^{H}_{2,l},\ldots, \hat{\textbf{h}}^{H}_{M,l} \bigr]^{H}$, and $\boldsymbol{w}_{l} = \bigr[ \boldsymbol{w}^{H}_{1,l}, \boldsymbol{w}^{H}_{2,l},\ldots, \boldsymbol{w}^{H}_{M,l} \bigr]^{H}$. 


\subsection{Optimization Problem Formulation}

\textbf{Power consumption model:} The total power consumption of the proposed RIS-CF network is modeled as
\begin{IEEEeqnarray}{rCl}
{P}_{\Sigma}(\boldsymbol{w}) &=&\sum\nolimits_{m\in\mathcal{M}} \xi_{m} \sum\nolimits_{l\in\mathcal{L}} \|\boldsymbol{w}_{m,l}\|^2 +\sum\nolimits_{m\in\mathcal{M}} {P}_{m}   \nonumber \\
&+& \sum\nolimits_{l\in\mathcal{L}} {P}_{l}+ \sum\nolimits_{n\in\mathcal{N},r\in\mathcal{R}}  {P}_{n,r} + \sum\nolimits_{m\in\mathcal{M},l\in\mathcal{L}}{P}^{\rm{BH}}_{m,l}\qquad
\end{IEEEeqnarray}
where ${P}_{m}$ and ${P}_{l}$ denote the circuit power consumption of  AP$_{m}$ and  UE $l$, respectively. $\xi_{m}$ regulates the ineffectiveness of the power amplifier at AP$_{m}$, and ${P}_{n,r}$ represents the low-power consumption of the $r$-th  reflecting element in the $n$-th RIS \cite{HuangTWC19}. The power consumption for conveying the data and beamformers related to the transmission from  AP$_m$ to  UE $l$ via backhaul transmission is represented by ${P}^{\rm{BH}}_{m,l}$.

\textbf{Backhaul constraint:} The data rate transmitted by the $m$-th backhaul link should be $\omega_{m}$ times greater than or equal to the total achievable rate at  AP$_{m}$, with $\omega_{m} \geq 1$, $\forall m \in \mathcal{M}$ \cite{LuongTSP17,PolegreWSA20}. Then, the per-backhaul capacity constraints can be expressed as:
\begin{align}
\sum\nolimits_{l\in\mathcal{L}} {R}_{l} (\boldsymbol{w},\boldsymbol{\psi}) \leq \frac{C_{m}^{\max}}{\omega_{m}},\ \forall m \in \mathcal{M}.
\end{align} 

Our goal is to maximize the EE of the RIS-CF network by jointly optimizing the involved variables $(\boldsymbol{w},\boldsymbol{\psi})$, stated as
\begin{subequations}\label{eq:op4}
	\begin{align}
	 \underset{\boldsymbol{w},\boldsymbol{\psi}}{\max}& \label{eq:op4a}
	 \quad\mathcal{E}(\boldsymbol{w},\boldsymbol{\psi})\triangleq\frac{B \sum_{l\in\mathcal{L}} {R}_{l}(\boldsymbol{w},\boldsymbol{\psi})}{{P}_{\Sigma}(\boldsymbol{w})} \\ \label{eq:op4b}
	\st&  \quad  \sum\nolimits_{l\in\mathcal{L}} \|\boldsymbol{w}_{m,l}\|^2 \leq {P}^{\max}_{m}, \forall m \in \mathcal{M}, \\ \label{eq:op4c}
	&\quad  {R}_{l}(\boldsymbol{w},\boldsymbol{\psi}) \geq {R}^{\rm{min}}_{l}, \forall l \in \mathcal{L}, \\ \label{eq:op4d}
	&\quad \sum\nolimits_{l\in\mathcal{L}} {R}_{l}(\boldsymbol{w},\boldsymbol{\psi}) \leq \frac{C_{m}^{\max}}{\omega_{m}}, \forall m \in \mathcal{M}, \\ \label{eq:op4e}
	&\quad  |\psi_{n,r}| = 1, \forall n \in \mathcal{N}, r \in \mathcal{R}
	\end{align}
\end{subequations}
where \eqref{eq:op4b} indicates the power constraint at  AP$_{m}$ with the maximum transmit power ${P}^{\rm{max}}_{m}$ and constraint \eqref{eq:op4c} is imposed to guarantee the minimum achievable rate requirement ${R}^{\rm{min}}_{l}$ of  UE $l$. Problem \eqref{eq:op4}  is nonconvex since the objective is nonconcave and constraints \eqref{eq:op4c}-\eqref{eq:op4e} are nonconvex. The complex rate function in \eqref{eq:rate} and the nonconvex constraint on the reflecting coefficients \eqref{eq:op4e} make this problem even more challenging to solve jointly.

\section{Proposed Alternating Descent-based  Iterative Algorithm}
In an iterative algorithm based on the IA framework \cite{Beck:JGO:10},
let ($\boldsymbol{\psi}^{(\kappa)},\boldsymbol{w}^{(\kappa)}$) be the feasible point of \eqref{eq:op4} obtained at the ($\kappa - 1)$-th iteration. In this section, an alternating descent algorithm with low complexity is proposed to solve \eqref{eq:op4}, i.e. at iteration $\kappa+1$ solving \eqref{eq:op4} to find the optimal solution $\boldsymbol{w}^\star:=\boldsymbol{w}^{(\kappa+1)}$ for given $\boldsymbol{\psi}^{(\kappa)}$, and then solving \eqref{eq:op4} to find the optimal solution $\boldsymbol{\psi}^\star:=\boldsymbol{\psi}^{(\kappa+1)}$ for given $\boldsymbol{w}^{(\kappa+1)}$.

\subsection{Beamforming Descent Iteration}
At  iteration $\kappa+1$, problem \eqref{eq:op4} for given $\boldsymbol{\psi}^{(\kappa)}$ can be expressed as
\begin{subequations} \label{eq:op4w}
	\begin{IEEEeqnarray}{cl}
		\underset{\boldsymbol{w},\rho}\max\quad  &
		\mathcal{F}(\boldsymbol{w},\boldsymbol{\psi}^{(\kappa)})\triangleq\frac{B \sum_{l\in\mathcal{L}} {R}_{l}(\boldsymbol{w}|\boldsymbol{\psi}^{(\kappa)})}{\rho} \label{eq:op4wa}\\ \st\quad & {P}_{\Sigma}(\boldsymbol{w}) \leq \rho, \label{eq:op4wa1}\\ 
		&   {R}_{l}(\boldsymbol{w}|\boldsymbol{\psi}^{(\kappa)}) \geq {R}^{\rm{min}}_{l}, \forall l \in \mathcal{L}, \label{eq:op4wc}\\ 
		&   \sum\nolimits_{l\in\mathcal{L}} {R}_{l}(\boldsymbol{w}|\boldsymbol{\psi}^{(\kappa)}) \leq \frac{C_{m}^{\max}}{\omega_{m}}, \forall m \in \mathcal{M}, \label{eq:op4wd}\\
		& \eqref{eq:op4b}
	\end{IEEEeqnarray}
\end{subequations}
where $\rho$ is a slack variable to represent the soft power consumption of RIS-CF. The objective \eqref{eq:op4wa} is nonconcave, and constraints \eqref{eq:op4wc} and \eqref{eq:op4wd} are nonconvex in $\boldsymbol{w}$. To tackle the nonconcavity of \eqref{eq:op4wa}, we use the following inequality: 
\begin{align}\label{eq:ieObj}
	\frac{1}{z}\ln\bigl(1+x^2/y\bigr) \geq \mathcal{A}^{(\kappa)} - \mathcal{B}^{(\kappa)}\frac{y}{x^2} - \mathcal{C}^{(\kappa)}z, \ \forall x,y, z\in\mathbb{R}_+
\end{align}
where $\mathcal{A}^{(\kappa)} \triangleq 2\frac{\ln\bigl(1+(x^{(\kappa)})^2/y^{(\kappa)}\bigr)}{z^{(\kappa)}} + \frac{(x^{(\kappa)})^2/y^{(\kappa)}}{z^{(\kappa)}(1+(x^{(\kappa)})^2/y^{(\kappa)})}$, $\mathcal{B}^{(\kappa)} \triangleq \frac{\bigl((x^{(\kappa)})^2/y^{(\kappa)}\bigr)^2}{z^{(\kappa)}(1+(x^{(\kappa)})^2/y^{(\kappa)})}$, and $\mathcal{C}^{(\kappa)} \triangleq \frac{\ln\bigl(1+(x^{(\kappa)})^2/y^{(\kappa)}\bigr)}{(z^{(\kappa)})^2}$. The proof of \eqref{eq:ieObj} is given in Appendix A.
For $\bar{\boldsymbol{w}}_{l} =  e^{-j\arg(\hat{\textbf{h}}^{H}_{l} (\boldsymbol{\psi}^{(\kappa)}) \boldsymbol{w}_{l})}\boldsymbol{w}_{l}$ with $j=\sqrt{-1}$, it follows that $|\hat{\textbf{h}}^{H}_{l} (\boldsymbol{\psi}^{(\kappa)}) \boldsymbol{w}_{l} | = \hat{\textbf{h}}^{H}_{l} (\boldsymbol{\psi}^{(\kappa)}) \bar{\boldsymbol{w}}_{l} = \Re\{\hat{\textbf{h}}^{H}_{l} (\boldsymbol{\psi}^{(\kappa)}) \bar{\boldsymbol{w}}_{l}\} \geq 0$ and 
$|\hat{\textbf{h}}^{H}_{l} (\boldsymbol{\psi}^{(\kappa)}) \boldsymbol{w}_{l'} | = \hat{\textbf{h}}^{H}_{l} (\boldsymbol{\psi}^{(\kappa)}) \bar{\boldsymbol{w}}_{l'}$ for all $l'\neq l$. Thus, ${R}_{l} (\boldsymbol{w},\boldsymbol{\psi}^{(\kappa)})$ can be rewritten as
\begin{align}\label{eq:9rewrite}
{R}_{l} (\boldsymbol{w}|\boldsymbol{\psi}^{(\kappa)}) = \ln \Bigr( 1 + \frac{\bigl( \Re\{\hat{\textbf{h}}^{H}_{l} (\boldsymbol{\psi}^{(\kappa)}) \boldsymbol{w}_{l}\} \bigr)^2}{\varphi_l(\boldsymbol{w}|\boldsymbol{\psi}^{(\kappa)})} \Bigl)
\end{align}
under the condition that $\Re\{\hat{\textbf{h}}^{H}_{l} (\boldsymbol{\psi}^{(\kappa)}) \boldsymbol{w}_{l}\} \geq 0 $, where $\varphi_l(\boldsymbol{w}|\boldsymbol{\psi}^{(\kappa)}) \triangleq \sum\nolimits_{j \in \mathcal{L} \setminus l} | \hat{\textbf{h}}^{H}_{l} (\boldsymbol{\psi}^{(\kappa)}) \boldsymbol{w}_{j}|^2 + \sigma^2$. Applying inequality \eqref{eq:ieObj}, we obtain
\begin{IEEEeqnarray}{rCl}\label{eq:ieObjCo}
	\frac{ {R}_{l}(\boldsymbol{w}|\boldsymbol{\psi}^{(\kappa)})}{\rho} \geq
	\mathcal{A}_l^{(\kappa)} - \mathcal{B}_l^{(\kappa)}\frac{\varphi_l(\boldsymbol{w}|\boldsymbol{\psi}^{(\kappa)})}{\bigl( \Re\{\hat{\textbf{h}}^{H}_{l} (\boldsymbol{\psi}^{(\kappa)}) \boldsymbol{w}_{l}\} \bigr)^2}
	- \mathcal{C}_l^{(\kappa)}\rho\qquad
\end{IEEEeqnarray}
where $\mathcal{A}_l^{(\kappa)} \triangleq 2\ln\bigl(1+\Gamma_l^{(\kappa)}\bigr)/\rho^{(\kappa)} + \Gamma_l^{(\kappa)}/\bigl(\rho^{(\kappa)}(1+\Gamma_l^{(\kappa)})\bigr)$, $\mathcal{B}_l^{(\kappa)}\triangleq (\Gamma_l^{(\kappa)})^2/\bigl(\rho^{(\kappa)}(1+\Gamma_l^{(\kappa)})\bigr)$,  $\mathcal{C}_l^{(\kappa)}\triangleq \ln\bigl(1+\Gamma_l^{(\kappa)}\bigr)/(\rho^{(\kappa)})^2$, and $\Gamma_l^{(\kappa)} = \bigl( \Re\{\hat{\textbf{h}}^{H}_{l} (\boldsymbol{\psi}^{(\kappa)}) \boldsymbol{w}_{l}^{(\kappa)}\} \bigr)^2/\varphi_l(\boldsymbol{w}^{(\kappa)}|\boldsymbol{\psi}^{(\kappa)})$. As a result, the concave lower bound of $ {R}_{l}(\boldsymbol{w},\boldsymbol{\psi}^{(\kappa)})/\rho$ is found as
\begin{align}\label{eq:ieObjConave}
&\mathcal{F}_l^{(\kappa)}(\boldsymbol{w},\rho|\boldsymbol{\psi}^{(\kappa)}) :=  \mathcal{A}_l^{(\kappa)} - \mathcal{B}_l^{(\kappa)}\frac{\varphi_l(\boldsymbol{w}|\boldsymbol{\psi}^{(\kappa)})}{\Omega_l^{(\kappa)}(\boldsymbol{w}|\boldsymbol{\psi}^{(\kappa)})} 
 - \mathcal{C}_l^{(\kappa)}\rho
\end{align}
with the condition
 $\Omega_l^{(\kappa)}(\boldsymbol{w}|\boldsymbol{\psi}^{(\kappa)})\triangleq 2\Re\{\hat{\textbf{h}}^{H}_{l} (\boldsymbol{\psi}^{(\kappa)}) \boldsymbol{w}^{(\kappa)}_{l}\}$ $\Re\{\hat{\textbf{h}}^{H}_{l} (\boldsymbol{\psi}^{(\kappa)}) \boldsymbol{w}_{l}\} - \bigl(\Re\{\hat{\textbf{h}}^{H}_{l} (\boldsymbol{\psi}^{(\kappa)}) \boldsymbol{w}^{(\kappa)}_{l}\}\bigr)^2 > 0$. We note that $\mathcal{F}_l^{(\kappa)}(\boldsymbol{w},\rho|\boldsymbol{\psi}^{(\kappa)})$ is a concave lower bound of ${R}_{l}(\boldsymbol{w}|\boldsymbol{\psi}^{(\kappa)})/\rho$, satisfying  $\mathcal{F}_l^{(\kappa)}(\boldsymbol{w}^{(\kappa)},\rho^{(\kappa)},\boldsymbol{\psi}^{(\kappa)})={R}_{l}(\boldsymbol{w}^{(\kappa)},\boldsymbol{\psi}^{(\kappa)})/\rho^{(\kappa)}$.

Following the steps \eqref{eq:9rewrite}-\eqref{eq:ieObjConave} with $\rho=1$, constraint \eqref{eq:op4wc} can be directly convexified by 
\begin{equation}\label{eq:op4wcConvex}
\mathcal{R}_l^{(\kappa)}(\boldsymbol{w}|\boldsymbol{\psi}^{(\kappa)})
\geq {R}^{\rm{min}}_{l}, \forall l \in \mathcal{L}
\end{equation}
where $\mathcal{R}_l^{(\kappa)}(\boldsymbol{w}|\boldsymbol{\psi}^{(\kappa)})\triangleq \bar{\mathcal{A}}_l^{(\kappa)} - \bar{\mathcal{B}}_l^{(\kappa)} \frac{\varphi_l(\boldsymbol{w}|\boldsymbol{\psi}^{(\kappa)})}{\Omega_l^{(\kappa)}(\boldsymbol{w}|\boldsymbol{\psi}^{(\kappa)})}$, with $\bar{\mathcal{A}}_l^{(\kappa)} \triangleq \ln\bigl(1+\Gamma_l^{(\kappa)}\bigr)+ \Gamma_l^{(\kappa)}/(1+\Gamma_l^{(\kappa)})$ and $\bar{\mathcal{B}}_l^{(\kappa)}\triangleq (\Gamma_l^{(\kappa)})^2/(1+\Gamma_l^{(\kappa)})$.

Finally, we rewrite \eqref{eq:op4wd} as
\begin{subnumcases}{\eqref{eq:op4wd}  \\ \Leftrightarrow}
\sum\nolimits_{l\in\mathcal{L}} \ln(1 + r_l) \leq \frac{C_{m}^{\max}}{\omega_{m}}, \forall m \in \mathcal{M},& \label{eq:op4wdEq1}\\
\frac{\bigl( \Re\{\hat{\textbf{h}}^{H}_{l} (\boldsymbol{\psi}^{(\kappa)}) \boldsymbol{w}_{l}\} \bigr)^2}{r_l} \leq \varphi_l(\boldsymbol{w}|\boldsymbol{\psi}^{(\kappa)}),\ \forall l\in\mathcal{L}                 &\label{eq:op4wdEq2}
\end{subnumcases}
where $\boldsymbol{r}\triangleq\{r_l\}_{l\in\mathcal{L}}$ are newly introduced variables. We note that $\ln(1 + r_l)$ is a concave function and $\varphi_l(\boldsymbol{w}|\boldsymbol{\psi}^{(\kappa)})$ is a convex function. Following the IA principle, constraints \eqref{eq:op4wdEq1} and \eqref{eq:op4wdEq2} are innerly convexified as
\begin{IEEEeqnarray}{rCl}
&&\sum\limits_{l\in\mathcal{L}}\Bigl(\ln(1+r_l^{(\kappa)}) - \frac{r_l^{(\kappa)}}{1+r_l^{(\kappa)}}  + \frac{1}{1+r_l^{(\kappa)}}r_l \Bigr) \leq \frac{C_{m}^{\max}}{\omega_{m}}, \forall m \in \mathcal{M},\label{eq:op4wdConvex1} \qquad
\end{IEEEeqnarray}
\begin{align}
\frac{\bigl( \Re\{\hat{\textbf{h}}^{H}_{l} (\boldsymbol{\psi}^{(\kappa)}) \boldsymbol{w}_{l}\} \bigr)^2}{r_l} \leq \varphi_l^{(\kappa)}(\boldsymbol{w}|\boldsymbol{\psi}^{(\kappa)}),\ \forall l\in\mathcal{L}\label{eq:op4wdConvex2}
\end{align}
where
${\small \varphi_l^{(\kappa)}(\boldsymbol{w}|\boldsymbol{\psi}^{(\kappa)}) \triangleq \sum\nolimits_{j\in\mathcal{L}\setminus  l}\bigl( 2\Re \{ (\boldsymbol{w}_{j}^{(\kappa)})^H\hat{\textbf{h}}_{l} (\boldsymbol{\psi}^{(\kappa)})\hat{\textbf{h}}^{H}_{l} (\boldsymbol{\psi}^{(\kappa)}) \boldsymbol{w}_{j} \} }$ $
 -|\hat{\textbf{h}}^{H}_{l} (\boldsymbol{\psi}^{(\kappa)}) \boldsymbol{w}_{j}^{(\kappa)}|^2\bigr) + \sigma^2.$

The approximate convex problem of \eqref{eq:op4w} solved at  iteration $\kappa+1$ is given as
\begin{subequations} \label{eq:op4wConvex}
	\begin{IEEEeqnarray}{cl}
		\underset{\boldsymbol{w},\boldsymbol{r},\rho}\max\quad  &
		\mathcal{F}^{(\kappa)}(\boldsymbol{w},\boldsymbol{\psi}^{(\kappa)})\triangleq B \sum\nolimits_{l\in\mathcal{L}} \mathcal{F}_l^{(\kappa)}(\boldsymbol{w},\rho|\boldsymbol{\psi}^{(\kappa)}) \label{eq:op4wConvexa}\\ 
		\st\quad & \Re\{\hat{\textbf{h}}^{H}_{l} (\boldsymbol{\psi}^{(\kappa)}) \boldsymbol{w}_{l}\} \geq 0,\ \forall l\in\mathcal{L}, \label{eq:op4wConvexb}\\ 
		& \Omega_l^{(\kappa)}(\boldsymbol{w}|\boldsymbol{\psi}^{(\kappa)}) \geq 0,\ \forall l\in\mathcal{L}, \label{eq:op4wConvexb1}\\
		&  \eqref{eq:op4b},\eqref{eq:op4wa1},  \eqref{eq:op4wcConvex}, \eqref{eq:op4wdConvex1}, \eqref{eq:op4wdConvex2}. \label{eq:op4wConvexc}
	\end{IEEEeqnarray}
\end{subequations}
For given $\boldsymbol{\psi}^{(\kappa)}$, the per-iteration computational complexity of solving \eqref{eq:op4wConvex} is $\mathcal{O}\bigl((4L+M)^{2.5}(L^2(MK+1)^2+4L+M)\bigr)$ \cite{Peaucelle-02-A}.
\subsection{Phase Descent Iteration}
For given $\boldsymbol{w}^{(\kappa+1)}$, by solving \eqref{eq:op4wConvex}, the total power consumption  ${P}_{\Sigma}(\boldsymbol{w}^{(\kappa+1)})$ is fixed and then problem \eqref{eq:op4} with regard to $\boldsymbol{\psi}$ can be expressed as
\begin{subequations} \label{eq:op5psi}
	\begin{IEEEeqnarray}{cl}
		\underset{\boldsymbol{\psi}}\max\quad  &
		\mathcal{G}(\boldsymbol{w}^{(\kappa+1)},\boldsymbol{\psi})\triangleq B \sum_{l\in\mathcal{L}} {R}_{l}(\boldsymbol{\psi}|\boldsymbol{w}^{(\kappa+1)})\label{eq:op5psia}\\ 
		\st\quad 
		&   {R}_{l}(\boldsymbol{\psi}|\boldsymbol{w}^{(\kappa+1)}) \geq {R}^{\rm{min}}_{l}, \forall l \in \mathcal{L}, \label{eq:op5psib}\\ 
		&   \sum\nolimits_{l\in\mathcal{L}} {R}_{l}(\boldsymbol{\psi}|\boldsymbol{w}^{(\kappa+1)}) \leq \frac{C_{m}^{\max}}{\omega_{m}}, \forall m \in \mathcal{M}, \label{eq:op5psic}\\
		& |\psi_{n,r}| = 1, \forall n \in \mathcal{N},  r \in \mathcal{R}.\label{eq:op5psid}
	\end{IEEEeqnarray}
\end{subequations}

The main difficulty  for solving \eqref{eq:op5psi} is due to the unit-modulus constraint \eqref{eq:op5psid}, which is also a nonconvex constraint. To overcome this issue, we relax \eqref{eq:op5psid} by the following convex constraint:
\begin{equation}\label{eq:UMCconvex}
|\psi_{n,r}|^2 \leq 1, \forall n \in \mathcal{N}, r \in \mathcal{R}
\end{equation}
which also implies that
$\sum_{n\in\mathcal{N}}\sum_{r\in\mathcal{R}}|\psi_{n,r}|^2 - NR \leq 0.$
To ensure that constraint \eqref{eq:op5psid} holds true at optimum, we introduce the following theorem.
\begin{theorem}
	The optimality of \eqref{eq:op5psi} is guaranteed by the following penalized optimization problem:
	\begin{subequations}\label{eq:op5psiEqui}
	\begin{align}
		\underset{\boldsymbol{\psi}}\max&\quad  
		 B \sum_{l\in\mathcal{L}} {R}_{l}(\boldsymbol{\psi}|\boldsymbol{w}^{(\kappa+1)}) + \eta\Bigl(\sum_{n\in\mathcal{N}}\sum_{r\in\mathcal{R}}|\psi_{n,r}|^2 - NR\Bigr)\label{eq:theoa}\\
		\st
		& \quad \eqref{eq:op5psib}, \eqref{eq:op5psic}, \eqref{eq:UMCconvex}\label{eq:theob}
	\end{align}
	\end{subequations}
	where $\eta > 0$ is  a constant penalty parameter making the objective and penalty terms comparable.
\end{theorem}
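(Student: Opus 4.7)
The plan is to prove Theorem~1 via a standard exact-penalty argument exploiting the one-sided nature of the relaxation \eqref{eq:UMCconvex}. Since every $\psi_{n,r}$ satisfies $|\psi_{n,r}|^2\le 1$, the quantity $\sum_{n\in\mathcal{N}}\sum_{r\in\mathcal{R}}|\psi_{n,r}|^2-NR$ is non-positive and vanishes if and only if the unit-modulus equality \eqref{eq:op5psid} holds at every element. Hence the penalty term in \eqref{eq:theoa} is non-positive, and the objective of \eqref{eq:op5psiEqui} agrees with that of \eqref{eq:op5psi} precisely on the feasible set of the latter.

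I would first dispose of the easy direction: any $\boldsymbol{\psi}$ feasible for \eqref{eq:op5psi} automatically satisfies \eqref{eq:UMCconvex}, is feasible for \eqref{eq:op5psiEqui}, and yields the same objective value there (with zero penalty). The optimal value of \eqref{eq:op5psiEqui} therefore dominates that of \eqref{eq:op5psi}, and Theorem~1 reduces to showing that every maximizer $\boldsymbol{\psi}^\star$ of \eqref{eq:op5psiEqui} in fact satisfies $|\psi^\star_{n,r}|=1$ for all $(n,r)$; for then $\boldsymbol{\psi}^\star$ is feasible for \eqref{eq:op5psi} and attains the same rate-only value, so the two optima coincide.

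For this contradiction step I would argue as follows. By compactness of the relaxed feasible set (intersection of \eqref{eq:op5psib}, \eqref{eq:op5psic}, \eqref{eq:UMCconvex}) and continuity of $R_l$ in $\boldsymbol{\psi}$, the rate part $B\sum_l R_l(\boldsymbol{\psi}|\boldsymbol{w}^{(\kappa+1)})$ is bounded above by a finite constant $\bar R$ independent of $\eta$. If a putative maximizer $\boldsymbol{\psi}^\star$ leaves slack $\epsilon:=NR-\sum_{n,r}|\psi^\star_{n,r}|^2>0$, the objective of \eqref{eq:op5psiEqui} at $\boldsymbol{\psi}^\star$ is at most $\bar R-\eta\epsilon$. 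Fixing any reference point $\widetilde{\boldsymbol{\psi}}$ with $|\widetilde{\psi}_{n,r}|=1$ that is feasible for \eqref{eq:op5psi}, its value in \eqref{eq:op5psiEqui} equals $B\sum_l R_l(\widetilde{\boldsymbol{\psi}}|\boldsymbol{w}^{(\kappa+1)})$ with no penalty, so choosing $\eta$ larger than $(\bar R-B\sum_l R_l(\widetilde{\boldsymbol{\psi}}|\boldsymbol{w}^{(\kappa+1)}))/\epsilon$ contradicts the optimality of $\boldsymbol{\psi}^\star$.

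The main obstacle is that the slack $\epsilon$ depends on $\boldsymbol{\psi}^\star$ and can a priori be made arbitrarily small, so the threshold above is not uniform. I would resolve this by invoking a standard exact-penalty theorem of Zangwill--Bertsekas type for optimization over a compact feasible set with continuous data, which guarantees a single finite $\eta_0$ such that every $\eta\ge\eta_0$ enforces exactness simultaneously for all maximizers; this is precisely what is meant by the theorem's statement that $\eta>0$ is chosen to ``make the objective and penalty terms comparable.'' A more constructive alternative is to exploit Lipschitz continuity of $R_l$ on the compact relaxed feasible set to derive an explicit, $\boldsymbol{\psi}^\star$-independent lower bound on the required $\eta$.
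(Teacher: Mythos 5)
Your proposal is correct and follows essentially the same exact-penalty route as the paper: the paper's proof likewise observes that \eqref{eq:UMCconvex} forces the penalty term $\sum_{n\in\mathcal{N}}\sum_{r\in\mathcal{R}}|\psi_{n,r}|^2 - NR$ to be non-positive (vanishing exactly when \eqref{eq:op5psid} holds) and then asserts that a sufficiently large $\eta$ makes \eqref{eq:op5psi} and \eqref{eq:op5psiEqui} share the same optimal solution, deferring the details to \cite[Appendix C]{HieuTCOM2019}. You are in fact more explicit than the paper about the only delicate point---the dependence of the slack $\epsilon$ on the maximizer and the need for a uniform threshold on $\eta$ via a Zangwill--Bertsekas-type exact-penalty or Lipschitz/error-bound argument---so your outline matches, and slightly sharpens, the paper's own reasoning.
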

\begin{proof}
Due to constraint \eqref{eq:UMCconvex}, the penalty term $\sum_{n\in\mathcal{N}}\sum_{r\in\mathcal{R}}|\psi_{n,r}|^2 - NR$ is always negative. This allows the uncertainties of the  unit-modulus  constraint to be penalized, which ensures $\psi_{n,r}=1$ at optimum. For a sufficiently large value of $\eta$,
problems \eqref{eq:op5psi} and \eqref{eq:op5psiEqui} share the same optimal solution. A detailed proof can be found in \cite[Appendix C]{HieuTCOM2019}.
\end{proof}

We can see that the developments  presented in Section III-A are very useful to approximate $R_{l}(\boldsymbol{\psi}|\boldsymbol{w}^{(\kappa+1)})$ in the objective \eqref{eq:theoa} and constraints \eqref{eq:op5psib} and  \eqref{eq:op5psic}. We also notice that $\sum_{n\in\mathcal{N}}\sum_{r\in\mathcal{R}}|\psi_{n,r}|^2$ is the sum of quadratic functions, which can be convexified by directly applying the IA method. As a result, we solve the following approximate convex problem of  \eqref{eq:op5psi} at  iteration $\kappa+1$:
\begin{subequations} \label{eq:op5psiConvex}
	\begin{IEEEeqnarray}{cl}
		\underset{\boldsymbol{\psi},\boldsymbol{r}}\max\quad  &
		\mathcal{G}^{(\kappa)}(\boldsymbol{w}^{(\kappa+1)},\boldsymbol{\psi})\triangleq \nonumber\\
		&B \sum\nolimits_{l\in\mathcal{L}} \mathcal{R}_l^{(\kappa)}(\boldsymbol{\psi}|\boldsymbol{w}^{(\kappa+1)}) + \eta\bigl(\mathcal{P}^{(\kappa)}(\boldsymbol{\psi}) - NR\bigr) \label{eq:op5psiConvexa}\quad\\ 
		\st\quad & \mathcal{R}_l^{(\kappa)}(\boldsymbol{\psi}|\boldsymbol{w}^{(\kappa+1)})
		\geq {R}^{\rm{min}}_{l}, \forall l \in \mathcal{L}, \label{eq:op5psiConvexb}\\
		&\frac{|\hat{\textbf{h}}^{H}_{l} (\boldsymbol{\psi}) \boldsymbol{w}_{l}^{(\kappa+1)}|^2}{r_l} \leq \varphi_l^{(\kappa)}(\boldsymbol{\psi}|\boldsymbol{w}^{(\kappa+1)}),\ \forall l\in\mathcal{L},\label{eq:op5psiConvexc}  \\
		& \eqref{eq:op4wdConvex1}, \eqref{eq:UMCconvex}
	\end{IEEEeqnarray}
\end{subequations}
where
$ \mathcal{P}^{(\kappa)}(\boldsymbol{\psi})\triangleq\sum_{n\in\mathcal{N}}\sum_{r\in\mathcal{R}}\bigl(2\Re\{{\bigl(\psi_{n,r}^{(\kappa)}\bigr)}^*\psi_{n,r}\} - |\psi_{n,r}^{(\kappa)}|^2   \bigr) $ and
$\varphi_l^{(\kappa)}(\boldsymbol{\psi}|\boldsymbol{w}^{(\kappa+1)})  \triangleq \sum\nolimits_{j\in\mathcal{L}\setminus  l}\bigl( 2\Re \{ (\boldsymbol{w}_{j}^{(\kappa+1)})^H\hat{\textbf{h}}_{l} (\boldsymbol{\psi}^{(\kappa)})\hat{\textbf{h}}^{H}_{l} (\boldsymbol{\psi})$ $ \boldsymbol{w}_{j}^{(\kappa+1)} \} 
 - |\hat{\textbf{h}}^{H}_{l} (\boldsymbol{\psi}^{(\kappa)}) \boldsymbol{w}_{j}^{(\kappa+1)}|^2\bigr) + \sigma^2. $ The per-iteration computational complexity of solving \eqref{eq:op5psiConvex} is $\mathcal{O}\bigl((2L
 +NR+M)^{2.5}((L+NR)^2+2L
 +NR+M)\bigr)$.

The proposed alternating descent-based iterative algorithm for solving problem  \eqref{eq:op4} is summarized in Algorithm \ref{alg_IterativeAlgorithm}.

\begin{algorithm}
	\begin{algorithmic}[t]
		\protect\caption{Proposed Alternating Descent-based Iterative Algorithm to Solve Problem \eqref{eq:op4}}
		\label{alg_IterativeAlgorithm}
		\global\long\def\algorithmicrequire{\textbf{Initialization:}}
		\REQUIRE Set $\kappa:=0$ and generate an initial feasible point $(\boldsymbol{\psi}^{(0)},\boldsymbol{w}^{(0)})$
		\REPEAT
		\STATE Given $\boldsymbol{\psi}^{(\kappa)}$, solve the convex problem \eqref{eq:op4wConvex} to find the optimal solution $\boldsymbol{w}^\star$ and update $\boldsymbol{w}^{(\kappa+1)} := \boldsymbol{w}^\star$
		\STATE Given $\boldsymbol{w}^{(\kappa+1)}$, solve the convex problem \eqref{eq:op5psiConvex} to find the optimal solution $\boldsymbol{\psi}^\star$ and update $\boldsymbol{\psi}^{(\kappa+1)} := \boldsymbol{\psi}^\star$
		\STATE Set $\kappa:=\kappa+1$
		\UNTIL Convergence\\
		\STATE \textbf{Ouput}: $(\boldsymbol{\psi}^{(\kappa)},\boldsymbol{w}^{(\kappa)})$
	\end{algorithmic} 
\end{algorithm}	

\textit{Convergence analysis:} From \eqref{eq:op4wConvex}, it is clear that $\mathcal{F}(\boldsymbol{w}^{(\kappa+1)},\boldsymbol{\psi}^{(\kappa)}) \geq \mathcal{F}^{(\kappa)}(\boldsymbol{w}^{(\kappa+1)},\boldsymbol{\psi}^{(\kappa)}) \geq \mathcal{F}^{(\kappa)}(\boldsymbol{w}^{(\kappa)},\boldsymbol{\psi}^{(\kappa)}) = \mathcal{F}(\boldsymbol{w}^{(\kappa)},\boldsymbol{\psi}^{(\kappa)})$. Similar to \eqref{eq:op5psiConvex}, we have 
$\mathcal{G}(\boldsymbol{w}^{(\kappa+1)},\boldsymbol{\psi}^{(\kappa+1)}) \geq \mathcal{G}^{(\kappa)}(\boldsymbol{w}^{(\kappa+1)},\boldsymbol{\psi}^{(\kappa+1)})\geq \mathcal{G}^{(\kappa)}(\boldsymbol{w}^{(\kappa+1)},\boldsymbol{\psi}^{(\kappa)}) = \mathcal{G}(\boldsymbol{w}^{(\kappa+1)},\boldsymbol{\psi}^{(\kappa)}).$ As a result, it is true that $\mathcal{E}(\boldsymbol{w}^{(\kappa+1)},\boldsymbol{\psi}^{(\kappa+1)}) \geq \mathcal{E}(\boldsymbol{w}^{(\kappa)},\boldsymbol{\psi}^{(\kappa)})$. In other words, Algorithm \ref{alg_IterativeAlgorithm} generates a sequence $\{(\boldsymbol{w}^{(\kappa)},\boldsymbol{\psi}^{(\kappa)})\}$ of improved points that converges at   least to  a  locally  optimal  solution \cite{Beck:JGO:10}.

\noindent\textit{Choice of $\eta$:} In practice, a very small $\eta$ does not make much difference, leading to a slow convergence. A very large $\eta$ results in an early convergence of Algorithm \ref{alg_IterativeAlgorithm} and a suboptimal solution $\boldsymbol{\psi}^{*}$. Given the simulation setup in Section \ref{sec:Numerical}, we have numerically observed that $\eta=10^3$ ensures the convergence of Algorithm \ref{alg_IterativeAlgorithm} with the highest performance.
\section{Numerical Results}\label{sec:Numerical}


An RIS-CF network including $M=4$ APs, $N=4$ RISs, and $L=8$ UEs is considered as illustrated in Fig. \ref{fig:LayoutConvergence}(a), where all APs, RISs, and UEs are uniformly distributed within a circular region with 1 km radius. The large-scale fading of all channels is modeled as \cite{Ngo:TWC:Mar2017}: $\ds\beta_{a,b} =\ 10^{\frac{{\rm{PL}}(d_{a,b})+\sigma_{sh}z}{10}}$, where $a=\{m,n\}$, $b=\{n,l\}$, $\forall m \in \mathcal{M}$, $n \in \mathcal{N}$, $l \in \mathcal{L}$, and $d_{a,b}$ is the distance (in km) from $a$ to $b$. The shadow fading is modeled as a random variable $z$, which follows $\mathcal{CN} (0,1)$ with standard deviation $\sigma_{sh}$ = 8 dB. The three-slope path loss model (in dB) is considered as \cite{Ngo:TWC:Mar2017}:
$
{\rm{PL}}(d_{a,b}) = -140.7 -35 {\rm{log}}_{10} (d_{a,b}) + 20 a_0 {\rm{log}}_{10} \bigl(d_{a,b}/d_0\bigr) 
 + 15 a_1 {\rm{log}}_{10} \bigl(d_{a,b}/d_1\bigr),
$
where $d_j$, with $j = \left\{0,1\right\}$, represents the reference distance and $a_j = {{\rm{max}}\left\{0,\frac{d_i - d_{a,b}}{|d_i - d_{a,b}|}\right\}}$.  Unless otherwise stated, the key parameters are provided in Table \ref{Parameter}, following studies in \cite{HieuJSAC2020,LuongTSP17,HuangTWC19}. The used convex solver is SeDuMi \cite{Peaucelle-02-A} in the MATLAB environment. 
We compare the performance of Algorithm \ref{alg_IterativeAlgorithm} with three existing resource allocation schemes: $i)$ CF network without RISs, $ii)$  Collocated network with RISs, and $iii)$ Collocated network without RISs. For collocated network, an AP is located at the center of the considered area to serve all UEs. It is equipped with $M K$ antennas and has a maximum transmit power of $M {P}^{\rm{max}}_{m}$. 

\begin{table}[!t]
\footnotesize
\caption{Simulation Parameters.}
\label{Parameter}
\centering
\begin{tabular}{c|c||c|c}
\hline
Parameter & Value & Parameter & Value\\
\hline
\hline
${P}^{\rm{BH}}_{m,l}$ & 0 dBW & $B$ & 20 MHz\\
\hline
${P}_{m}$& 9 dBW & ${P}_{n,r}$ & 10 dBm \\
\hline
${P}_{l}$& 10 dBm & $\xi_{m}$& 1.2\\
\hline
$C_{m}^{\max}\equiv C^{\max},\forall m$ & 500 b/s/Hz & ${P}^{\max}_{m}\equiv {P}^{\max}$& 35 dBm\\
\hline
${R}^{\rm{min}}_{l}$ & 0.5 b/s/Hz & $K$  &  8\\
\hline
$R$ & 8 & $\sigma^2$&  -104 dBm \\
\hline
($d_0$, $d_{1}$)&  (10,50) m & $\eta$ & 10$^3$ \\
\hline
\end{tabular}
\end{table}


\begin{figure}
	\begin{center}
		\begin{subfigure}[System layout used in this section.]{
				\includegraphics[width=0.33\textwidth,trim={0.0cm 0.0cm 0.0cm 0.0cm}]{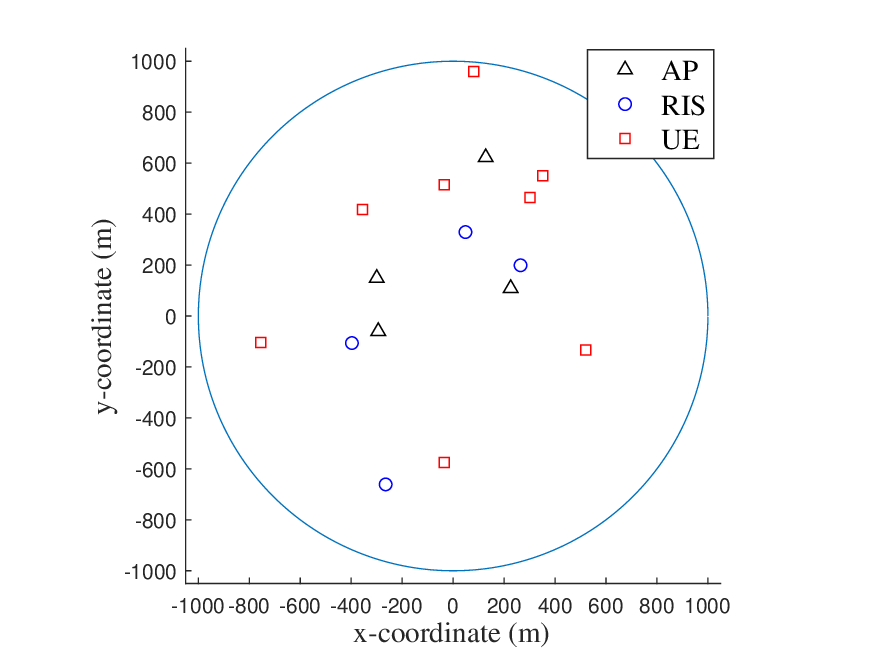}}
			\label{Layout}
		\end{subfigure}
		\begin{subfigure}[Convergence of Algorithm \ref{alg_IterativeAlgorithm} with one random channel realization.]{
				\includegraphics[width=0.35\textwidth,trim={0.0cm 0.0cm 0.0cm 0.00cm}]{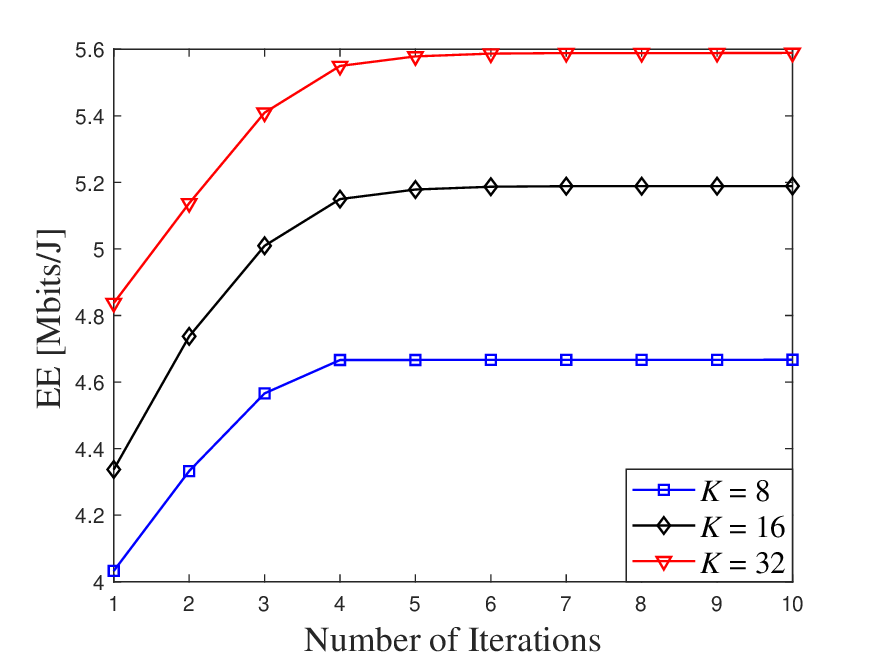}}
			\label{Convergence}
		\end{subfigure}
		\caption{(a) System layout with $M=4$ APs, $N=4$ RISs, and $L=8$ UEs, and (b) Convergence of Algorithm \ref{alg_IterativeAlgorithm} with different number of antennas per AP.}\label{fig:LayoutConvergence}
	\end{center}
\vspace{-20pt}
\end{figure}

Fig. \ref{fig:LayoutConvergence}(b) plots the typical convergence behavior of Algorithm \ref{alg_IterativeAlgorithm} for a random channel realization. On average, Algorithm \ref{alg_IterativeAlgorithm} requires about 6 iterations to reach the almost optimal value of EE in all cases. As expected, increasing $K$ results in better EE, but also requires slightly more iterations.

\begin{figure}
	\begin{center}
		\begin{subfigure}[EE vs. the maximum transmit power per AP, ${P}^{\max}$.]{
				\includegraphics[width=0.35\textwidth,trim={0.0cm 0.0cm 0.0cm 0.0cm}]{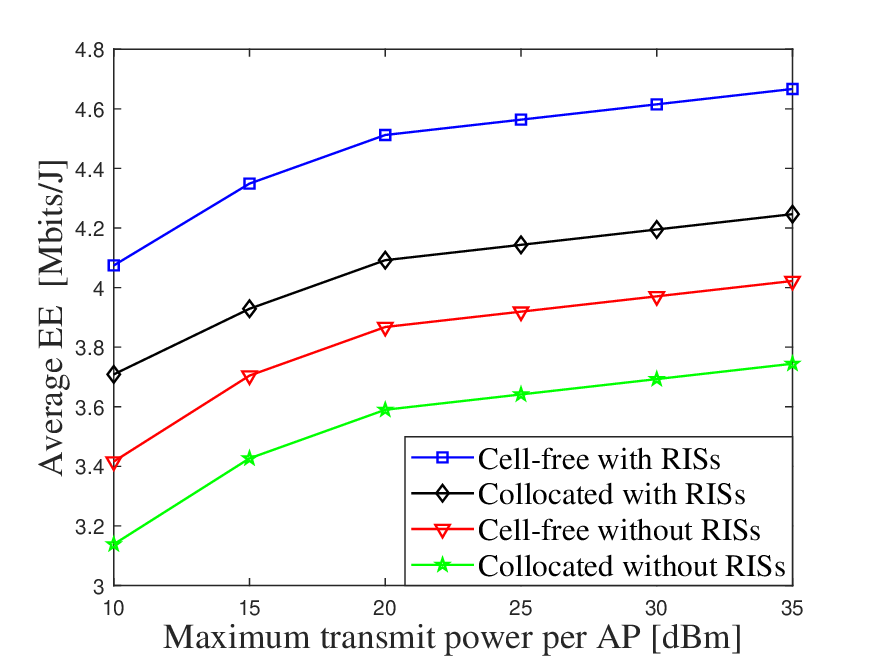}}
			\label{Power}
		\end{subfigure}
		\begin{subfigure}[EE vs. the maximum backhaul capacity, $C^{\max}$.]{
				\includegraphics[width=0.35\textwidth,trim={0.0cm 0.0cm 0.0cm 0.00cm}]{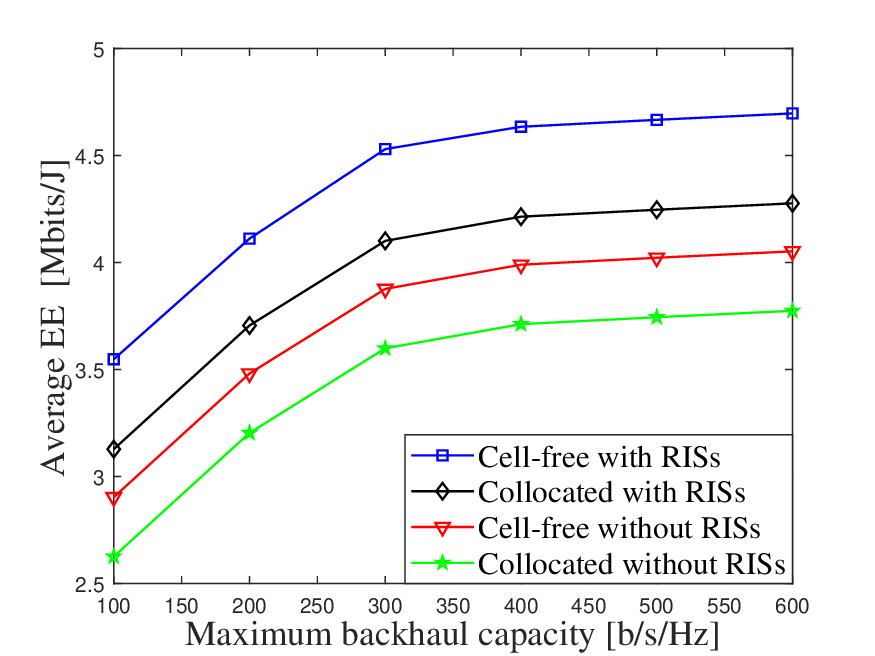}}
			\label{Capacity}
		\end{subfigure}
		\caption{Average EE of the RIS-CF network.}\label{fig:PowerCapacity}
	\end{center}
\vspace{-20pt}
\end{figure}


Fig. \ref{fig:PowerCapacity}(a) depicts the average EE versus the maximum transmit power per AP for different resource allocation schemes. It can be seen that the average EE of all considered schemes significantly enhances when ${P}^{\max}$ increases. Further, the EE of the CF network with and without RISs is much better than that of the collocated network with and without RISs, respectively. This is attributed to the fact that the CF network with distributed APs brings the service antennas closer to UEs, which not only reduces path losses but also provides higher degree of macro-diversity, compared to the collocated network. Moreover, both CF and collocated networks with RISs achieve much higher EE compared to the networks without RISs. This observation confirms that RIS boosts up the EE of CF and collocated networks. Notably, the proposed RIS-CF network provides the best EE among all considered schemes.


In Fig. \ref{fig:PowerCapacity}(b), the average EE is depicted versus the maximum backhaul capacity, $C^{\max}$. As can be seen, the EE of all networks greatly increases when $C^{\max}$ increases. This is because the higher the maximum backhaul capacity, the more data can be conveyed over the backhaul links. Increasing $C^{\max}$ also leads to a remarkable gain in the EE by the proposed RIS-CF over other networks.

\section{Conclusion}
This letter has considered the EE maximization problem of CF networks with the assistance of multiple RISs. The problem involves a joint optimization of  transmit beamformers at  APs and reflecting coefficients at  RISs subject to the limited backhaul capacity constraints, which is formulated as a nonconvex problem. To address this problem, we have developed a low-complexity alternating descent algorithm based on the IA framework, which converges at least to a locally optimal solution. Numerical results have confirmed the fast convergence of the proposed algorithm. Further, they have revealed the advantages of CF and RIS over collocated network.


%

\appendices
\section{Proof of Inequality \eqref{eq:ieObj} }
First, we note that $f(t,z) \triangleq \ln(1+1/t)/z$ is a concave function on the domain $(z > 0,\, t > 0)$ \cite{Dinh:JSAC:18}. By the first-order Taylor approximation, it follows that
\begin{eqnarray}\label{eq:A1}
f(t,z)&\geq& f(t^{(\kappa)},z^{(\kappa)})  - \nabla_{t}f(t^{(\kappa)},z^{(\kappa)})(t-t^{(\kappa)}) \nonumber\\
&&- \nabla_{z}f(t^{(\kappa)},z^{(\kappa)})(z-z^{(\kappa)})\nonumber\\
&=& 2f(t^{(\kappa)},z^{(\kappa)}) + \frac{1}{z^{(\kappa)}(t^{(\kappa)}+1)}  \nonumber\\
&& -\; \frac{1}{z^{(\kappa)}t^{(\kappa)}(t^{(\kappa)}+1)}t - \frac{f(t^{(\kappa)},z^{(\kappa)})}{z^{(\kappa)}}z.
\end{eqnarray}
By replacing $t=y/x^2$ and $t^{(\kappa)}=y^{(\kappa)}/(x^{(\kappa)})^2$, we obtain the inequality \eqref{eq:ieObj}. 

%
%

\ifCLASSOPTIONcaptionsoff
  \newpage
\fi



\bibliographystyle{IEEEtran}
\bibliography{IEEEfull}

\begin{thebibliography}{10}
\providecommand{\url}[1]{#1}
\csname url@samestyle\endcsname
\providecommand{\newblock}{\relax}
\providecommand{\bibinfo}[2]{#2}
\providecommand{\BIBentrySTDinterwordspacing}{\spaceskip=0pt\relax}
\providecommand{\BIBentryALTinterwordstretchfactor}{4}
\providecommand{\BIBentryALTinterwordspacing}{\spaceskip=\fontdimen2\font plus
\BIBentryALTinterwordstretchfactor\fontdimen3\font minus
  \fontdimen4\font\relax}
\providecommand{\BIBforeignlanguage}[2]{{%
\expandafter\ifx\csname l@#1\endcsname\relax
\typeout{** WARNING: IEEEtran.bst: No hyphenation pattern has been}%
\typeout{** loaded for the language `#1'. Using the pattern for}%
\typeout{** the default language instead.}%
\else
\language=\csname l@#1\endcsname
\fi
#2}}
\providecommand{\BIBdecl}{\relax}
\BIBdecl

\bibitem{YadavWC2018}
A.~{Yadav} and O.~A. {Dobre}, ``All technologies work together for good: A
  glance at future mobile networks,'' \emph{IEEE Wireless Commun.}, vol.~25,
  no.~4, pp. 10--16, Aug. 2018.

\bibitem{TengST2018}
Y.~{Teng} \emph{et~al.}, ``Resource allocation for ultra-dense networks: A
  survey, some research issues and challenges,'' \emph{IEEE Commun. Surveys
  Tuts.}, vol.~21, no.~3, pp. 2134--2168, 3rd Quart. 2019.

\bibitem{HieuJSAC2020}
H.~V. {Nguyen} \emph{et~al.}, ``On the spectral and energy efficiencies of
  full-duplex cell-free massive {MIMO},'' \emph{IEEE J. Select. Areas Commun.},
  vol.~38, no.~8, pp. 1698--1718, Aug. 2020.

\bibitem{Ngo:TWC:Mar2017}
H.~Q. {Ngo} \emph{et~al.}, ``Cell-free massive {MIMO} versus small cells,''
  \emph{IEEE Trans. Wireless Commun.}, vol.~16, no.~3, pp. 1834--1850, Mar.
  2017.

\bibitem{LuongTSP17}
P.~{Luong} \emph{et~al.}, ``Optimal joint remote radio head selection and
  beamforming design for limited fronthaul {C-RAN},'' \emph{IEEE Trans. Signal
  Process.}, vol.~65, no.~21, pp. 5605--5620, Nov. 2017.

\bibitem{PolegreWSA20}
A.~A. {Polegre} and A.~G. {Armada}, ``User-centric massive {MIMO} systems with
  hardening-based clusterization,'' in \emph{Proc. 24th International ITG
  Workshop on Smart Antennas}, Feb. 2020, pp. 1--5.

\bibitem{Ngo:IEEETGCN:Mar2018}
H.~Q. {Ngo} \emph{et~al.}, ``On the total energy efficiency of cell-free
  massive {MIMO},'' \emph{IEEE Trans. Green Commun. Netw.}, vol.~2, no.~1, pp.
  25--39, Mar. 2018.

\bibitem{WuTWC20}
Q.~{Wu} and R.~{Zhang}, ``Intelligent reflecting surface enhanced wireless
  network via joint active and passive beamforming,'' \emph{IEEE Trans.
  Wireless Commun.}, vol.~18, no.~11, pp. 5394--5409, Nov. 2019.

\bibitem{Zhang2020}
Z.~{Zhang} and L.~{Dai}, ``A joint precoding framework for wideband
  reconfigurable intelligent surface-aided cell-free network,'' \emph{arXiv
  preprint arXiv:2002.03744v2}, Feb. 2020.

\bibitem{Huang2020}
S.~{Huang} \emph{et~al.}, ``Decentralized beamforming design for intelligent
  reflecting surface-enhanced cell-free networks,'' \emph{arXiv preprint
  arXiv:2006.12238v1}, June 2020.

\bibitem{Beck:JGO:10}
A.~Beck, A.~Ben-Tal, and L.~Tetruashvili, ``A sequential parametric convex
  approximation method with applications to nonconvex truss topology design
  problems,'' \emph{J. Global Optim.}, vol.~47, no.~1, pp. 29--51, May 2010.

\bibitem{HuangTWC19}
C.~{Huang} \emph{et~al.}, ``Reconfigurable intelligent surfaces for energy
  efficiency in wireless communication,'' \emph{IEEE Trans. Wireless Commun.},
  vol.~18, no.~8, pp. 4157--4170, Aug. 2019.

\bibitem{Peaucelle-02-A}
\BIBentryALTinterwordspacing
D.~Peaucelle, D.~Henrion, and Y.~Labit, ``Users guide for {SeDuMi} interface
  1.03,'' 2002. [Online]. Available:
  \url{http://homepages.laas.fr/peaucell/software/sdmguide.pdf}
\BIBentrySTDinterwordspacing

\bibitem{HieuTCOM2019}
H.~V. {Nguyen} \emph{et~al.}, ``Joint power control and user association for
  {NOMA}-based full-duplex systems,'' \emph{IEEE Trans. Commun.}, vol.~67,
  no.~11, pp. 8037--8055, Nov. 2019.

\bibitem{Dinh:JSAC:18}
V.-D. Nguyen \emph{et~al.}, ``A new design paradigm for secure full-duplex
  multiuser systems,'' \emph{IEEE J. Select. Areas Commun.}, vol.~36, no.~7,
  pp. 1480--1498, July 2018.

\end{thebibliography}
\end{document}